\documentclass[journal]{IEEEtran}

\ifCLASSINFOpdf
\else
\fi

\usepackage{color}
\usepackage{epsfig}
\usepackage{subfig}
\usepackage{mathrsfs}
\usepackage{booktabs}
\usepackage{graphicx}
\usepackage{epstopdf}
\usepackage{multirow}
\usepackage{amsmath,amssymb}
\usepackage{amsthm}
\usepackage{fancyhdr}
\usepackage{tabularx}
\usepackage{xfrac}
\usepackage{CJK}
\usepackage{mathptmx}
\usepackage{mathrsfs}
\usepackage{graphics}
\usepackage{times}

\newtheorem{prop}{Proposition}[section]

\newtheorem{remark}{Remark}[section]

\hyphenation{op-tical net-works semi-conduc-tor}

\begin{document}

\title{Towards a Universal Approach for Identifying Cascading Failures of Power Grids}

\author{Chao Zhai, Gaoxi Xiao, Hehong Zhang and Tso-Chien Pan \thanks{Chao Zhai, Gaoxi Xiao, Hehong Zhang and Tso-Chien Pan are with Institute of Catastrophe Risk Management, Nanyang Technological University, 50 Nanyang Avenue, Singapore 639798. They are also with Future Resilient Systems, Singapore-ETH Centre, 1 Create Way, CREATE Tower, Singapore 138602. Chao Zhai, Gaoxi Xiao and Hehong Zhang are also with School of Electrical and Electronic Engineering, Nanyang Technological University, Singapore. Corresponding author: Gaoxi Xiao. Email: egxxiao@ntu.edu.sg}}


\maketitle

\begin{abstract}
Due to the evolving nature of power systems and the complicated coupling relationship of power devices, it has been a great challenge
to identify the contingencies that could trigger cascading blackouts of power systems. This paper aims to develop a universal approach for identifying the initial disruptive contingencies that can result in the worst-case cascading failures of power grids. The problem of contingency identification is formulated in a unified mathematical framework, and it can be solved by the Jacobian-Free Newton-Krylov (JFNK) method in order to circumvent the Jacobian matrix and relieve the computational burden. Finally, numerical simulations are carried out to validate the proposed identification approach on the IEEE $118$ Bus System.
\end{abstract}

\begin{IEEEkeywords}
Cascading failures, contingency identification, power grids, JFNK method
\end{IEEEkeywords}

\IEEEpeerreviewmaketitle

\section{Introduction}
\IEEEPARstart{T}{he} past decades have witnessed several large blackouts in the world such as India Blackout (2012), US-Canada Blackout (2003), Italy Blackout (2003) and Southern Brazil Blackout (1999) to name just a few, which have left millions of residents without power supply and caused huge financial losses \cite{mcl09}. In such catastrophe events, the initial contingencies ($e.g.$ extreme weather, terrorist attack and operator error) play a crucial role in triggering the cascading outage of power systems. It is reported that the mal-operation of a protection relay is the key ``trigger" of the final line outage sequence in most blackouts \cite{beck05}. For instance, conventional relays may lead to unselective tripping under high load conditions, which could initiate the chain reaction of branch outages under certain conditions (e.g., a wrong relay operation of Sammis-Star line in the 2003 US-Canada Blackout \cite{beck05}). The reliability and resilience of power grids are closely related to the proactive elimination of disruptive initial contingencies. Thus, it is vital to identify the initial contingency that causes the most severe blackouts and work out remedial schemes against cascading blackouts in advance.

In practice, electrical power devices such as FACTS devices, HVDC links and protective relays serve as the major protective barrier against cascading blackouts. To be specific, FACTS devices significantly contribute to the stability improvement of power systems, while HVDC links behave like a ``firewall" to prevent the propagation of cascading outages. Actually, the FACTS devices have been widely installed in power transmission networks to improve the capability of power transmission, controllability of power flow, damping of power oscillation and post-contingency stability. As a series FACTS device, the thyristor-controlled series capacitor (TCSC) allows fast and continuous adjustments of branch impedance in order to control the power flow and improve the transient stability \cite{jov05}. In addition, the HVDC links assist in preventing cascades propagation and restoring the power flow after faults. For example, Qu\'{e}bec power system in Canada survived the cascades in the 2003 US-Canada Blackout due to its DC interconnection to the US power systems \cite{beck05}. As the most common protection device, protective relays of power system react passively to the system oscillation and promptly remove the overloading elements without affecting the normal operation of the rest of the system. Meanwhile it allows for time delay of abnormal oscillations to neglect the trivial disturbances and avoid the overreaction to the transient state changes \cite{jia16}. It is necessary to take into account the protection mechanism of power devices for the practical cascading process.

So far, cascading failures of power systems have been investigated through two distinct routes. Specifically, some researchers propose rigorous mathematical formulations for exploring vulnerable elements or contingencies in power grids \cite{alm15,tae16}, while others focus on the accurate modeling of practical cascading failures \cite{jia16,yan15}. The identification approaches are developed to search for critical branches or initial malicious disturbances that can cause the large-scale disruptions \cite{chen05,dav11,don08,bie10,roc11,epp12,zhai17}. For instance, the methods are proposed to identify the collections of $n-k$ contingencies based on the event trees \cite{chen05}, line outage distribution factor \cite{dav11} and other optimization techniques \cite{don08,bie10,roc11}. Nevertheless, these optimization approaches are not efficient to identify the large collections of $n-k$ contingencies that result in cascading blackouts. To address this problem, a ``random chemistry" algorithm is designed with the relatively low computational complexity \cite{epp12}. In addition, an optimal control approach is adopted to identify the initial contingencies by treating these contingencies as the control inputs \cite{zhai17}. And this approach is able to determine the continuous changes of branch admittance other than direct branch outages as the initial contingencies. An adaptive algorithm based on multi-agent system is designed to prevent cascading failures of power grids subject to $N-1$ or $N-1-1$ contingencies without load shedding \cite{bab18}. It is suggested that the structural characteristic of communication networks and the interaction between power grids and communication networks are closely related to the ability of power grids to prevent the cascades \cite{cai16}.

This work aims at a rigorous mathematical formulation of identifying the worst-case cascading failures, which is expected to allow for the practical physical characteristics of power system cascades. Compared to existing work, the key contributions of this paper are highlighted as follows
\begin{enumerate}
  \item Propose a general mathematical formulation for identifying the various contingencies that can trigger the power system cascades and result in the severe disruptions.
  \item Develop an efficient numerical algorithm based on the Jacobian-Free Newton-Krylov method to search for the critical contingencies with the guaranteed performance in theory.
  \item Validate the proposed approach on a large-scale power grid and investigate the effect of time delay in protective relays on the cascading failures.
\end{enumerate}

The remainder of this paper is organized as follows. Section \ref{sec:prob} presents the cascades model and optimization formulation, followed by the numerical solver and theoretical analysis in Section \ref{sec:num}. Next, the identification approach is validated in Section \ref{sec:sim}. Finally, we conclude the paper and discuss future work in Section \ref{sec:con}.

\section{Problem Formulation}\label{sec:prob}

\subsection{Cascades Process}
This section aims to characterize the cascading failure of power grids subject to the initial contingency and system stresses. Figure \ref{cascade} presents the cascading process of power grids after the initial contingency or triggering event is added on the system.
Specifically, the initial contingency or triggering event may result in the independent branch outages, which could cause hidden failures in power grids or situational awareness errors of operators in the control center. As a result, the stresses in power networks are aggravated by hidden failures, mis-operations of operators or the independent branch outages directly. Such stresses may give rise to dependent branch outages further and thus force power systems to take protective actions (e.g., load shedding or generator tripping). Moreover, the branch outages change the configuration of power networks (i.e., network topology), which in return aggravates the stress of power networks. Essentially, the above positive feedback process contributes to accelerating the cascading failures and ends up with the power systems blackout.

\begin{figure}
\scalebox{0.04}[0.04]{\includegraphics{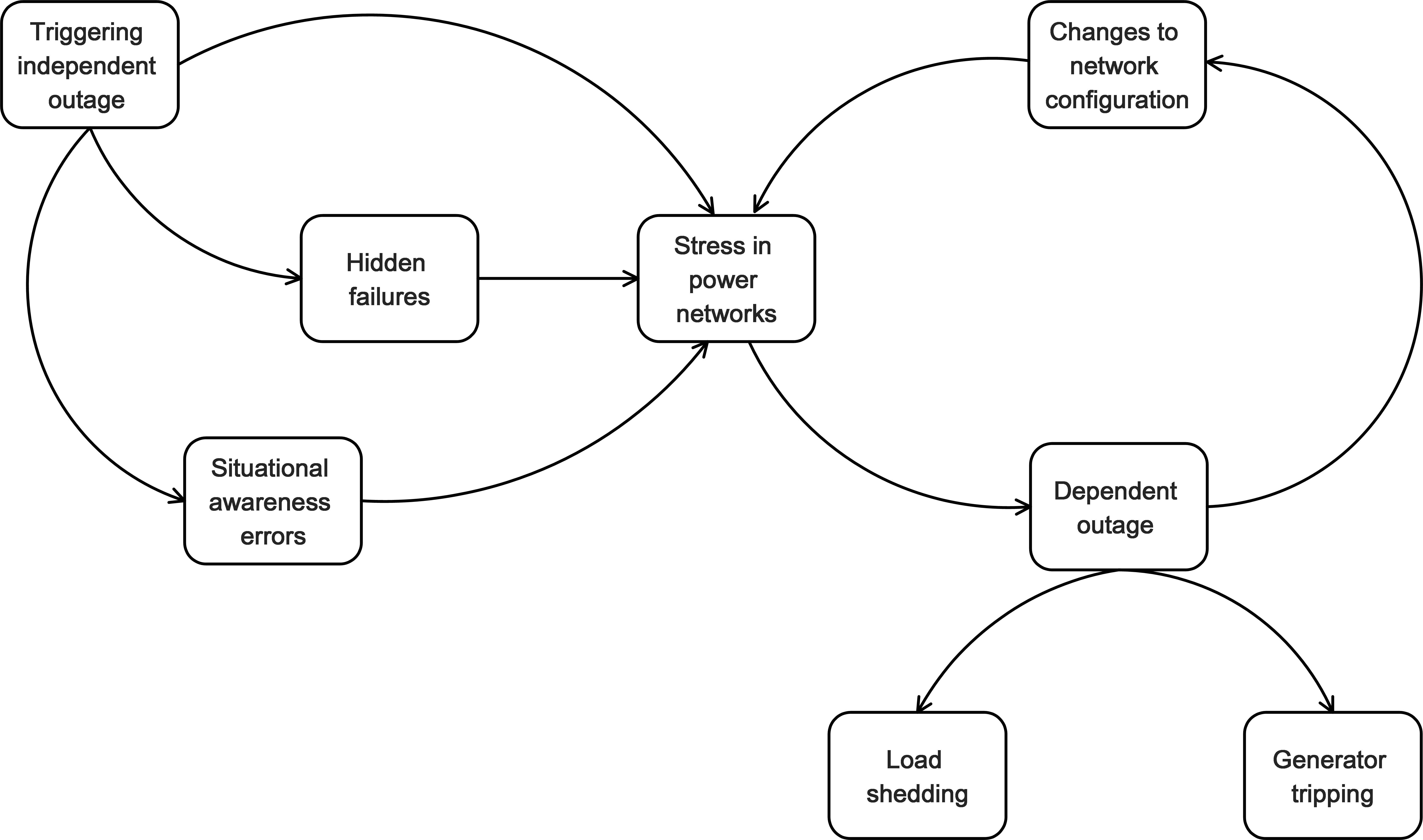}}\centering
\caption{\label{cascade} Cascading failure process of power grids \cite{hine16}.}
\end{figure}

\subsection{Mathematical model}
In terms of the practical cascades process, it is suggested that dynamic power networks can be modeled as a system of hybrid differential-algebraic equations \cite{jia16}.
\begin{equation}\label{hybrid}
    \left\{
      \begin{array}{ll}
        \dot{x}=f(t,x,y,\theta) & \hbox{} \\
        0=g(t,x,y,\theta) & \hbox{} \\
        0>h(t,x,y,\theta) & \hbox{}
      \end{array}
    \right.
\end{equation}
where $x$ denotes a vector of continuous state variables subject to differential relationships, and $y$ represents a vector of continuous state variables under the constraints of algebraic equations. In addition, $\theta$ refers to a vector of discrete binary state variables (i.e., $\theta_i\in\{0,1\}$). The set of differential equations in the system (\ref{hybrid}) characterize the dynamic response of machines, governors, exciters and loads in power grids. The algebraic components mainly describe the AC power flow equations, and the inequality terms reflect the discrete events (e.g., the automatic line tripping by protective relays, manual operations, lightning, etc) during cascading failures. In practice, the structure of power grids (e.g., network topology, component parameters) is affected once a discrete event occurs. Thus, the discrete events directly influence the dynamic response of relevant devices and the distribution of power flow in power grids. To incorporate the effect of discrete events at time instants $t_k$, the time axis is divided into a series of time intervals $[t_{k-1},t_k)$, $k\in I_m=\{1,2,...,m\}$. At each time interval, the set of differential-algebraic equations is solved using the updated parameters and initial conditions of power system model due to discrete events. By solving the system (\ref{hybrid}) in each time interval, the vectors of state variables $x$ and $y$ at the terminal of each time interval can be obtained as follows
\begin{equation}\label{evolu}
    \left\{
      \begin{array}{ll}
        x_k=F(t_k,x_{k-1},y_{k-1},\theta_{k-1}) & \hbox{} \\
        y_k=G(t_k,x_{k-1},y_{k-1},\theta_{k-1}) & \hbox{}
      \end{array}
    \right.
\end{equation}
where $x_k=x(t_k)$, $y_k=y(t_k)$ and $\theta_k=\theta(t_k)$, $k\in I_m$. And the iterated functions $F$ and $G$ describe the evolution process of state variables $x_k$ and $y_k$, respectively.

\subsection{Optimization formulation}
The cascading blackouts result in the severe damage of power networks and paralyze the service of power supply. Our goal is to search for the initial contingencies that cause the worst disruptions of power grids at the end of cascading blackouts. Therefore, the problem of identifying initial contingencies in power grids is formulated as
\begin{equation}\label{formulation}
\begin{split}
&~~~~~~~\min_{\delta\in\Omega}~ J(\delta, x_m, y_m) \\
&s.~t.~x_k=F(t_k,x_{k-1},y_{k-1},\theta_{k-1}) \\
&~~~~~~~y_k=G(t_k,x_{k-1},y_{k-1},\theta_{k-1}),~k \in I_m \\
\end{split}
\end{equation}
where $\delta$ denotes the initial contingencies in power grids that change the state variables $x$, $y$ or $z$ in the initial time interval $[t_0,t_1)$. And $\Omega$ represents the set of physical constraints on the initial contingencies, and it can be described as $\bigcap_{i=1}^{n}\{\delta~|~v_i(\delta)\leq0\}$ with the inequality constraints $v_i(\delta)\leq0$.  For simplicity, it is assumed that the triggering event or initial contingency occurs at time $\tau\in[t_0,t_1)$. Then we have $\left(x(\tau_{+}),y(\tau_{+}),\theta(\tau_{+})\right)=\Gamma(x(\tau),y(\tau),\theta(\tau),\delta)$, and the function $\Gamma$ characterizes the effect of the contingency $\delta$ on the state variables at time $\tau$.

The objective function $J(\delta, x_m, y_m)$ quantifies the disruptive level of power grids at the end of cascading failures. A smaller value of $J(\delta, x_m, y_m)$ indicates a worse disruption of power grids due to cascading blackouts. Then it follows from the Karush-Kuhn-Tucker (KKT) conditions that the necessary conditions for optimal solutions to Optimization Problem (\ref{formulation}) is presented as follows \cite{man94}.
\begin{prop}
The optimal solution $\delta^{*}$ to Optimization Problem (\ref{formulation}) with the multipliers $\mu_i$, $i\in I_n$ satisfies the KKT conditions
\begin{equation}\label{kkt}
\begin{split}
&\nabla J(\delta^*,x_m,y_m)+\sum_{i=1}^{n}\mu_i\nabla v_i(\delta^*)=\mathbf{0} \\
&v_i(\delta^*)+\omega_i^2=0 \\
&\mu_i\cdot v_i(\delta^*)=0 \\
&\mu_i-\sigma_i^2=0, \quad i\in I_n \\
\end{split}
\end{equation}
where $\omega_i$ and $\sigma_i$, $i\in I_n$ are the unknown variables.
\end{prop}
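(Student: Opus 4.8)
The plan is to reduce Optimization Problem (\ref{formulation}) to a standard finite-dimensional inequality-constrained program in the single decision variable $\delta$, apply the classical first-order KKT theorem, and then re-express the resulting conditions through squared slack variables to arrive at system (\ref{kkt}). First I would observe that the recursions $x_k=F(t_k,x_{k-1},y_{k-1},\theta_{k-1})$ and $y_k=G(t_k,x_{k-1},y_{k-1},\theta_{k-1})$, together with the jump map $\Gamma$ encoding the action of $\delta$ on the initial interval, determine the terminal states $x_m$ and $y_m$ as differentiable functions of $\delta$ through forward composition of the iterated maps. Substituting these into the objective collapses the dynamic constraints and leaves $\min_{\delta} J(\delta,x_m(\delta),y_m(\delta))$ subject only to the inequalities $v_i(\delta)\leq0$, $i\in I_n$, which define $\Omega$. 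Here $\nabla J(\delta^*,x_m,y_m)$ is to be read as the total gradient with respect to $\delta$, obtained by the chain rule through $F$ and $G$.

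With the problem in this form, the core step is to invoke the standard KKT necessary conditions \cite{man94}. Assuming a constraint qualification holds at the optimizer $\delta^{*}$ (for instance linear independence of the gradients $\nabla v_i(\delta^*)$ of the active constraints), there exist multipliers $\mu_i\geq0$ such that the stationarity relation $\nabla J(\delta^*,x_m,y_m)+\sum_{i=1}^{n}\mu_i\nabla v_i(\delta^*)=\mathbf{0}$, the primal feasibility $v_i(\delta^*)\leq0$, and the complementary slackness $\mu_i\cdot v_i(\delta^*)=0$ all hold. The first and third of these are already precisely the first and third lines of (\ref{kkt}).

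It then remains only to recast the two inequality conditions as the equalities appearing in (\ref{kkt}). Since $v_i(\delta^*)\leq0$, I would introduce a real variable $\omega_i$ with $\omega_i^2=-v_i(\delta^*)$, giving $v_i(\delta^*)+\omega_i^2=0$; and since $\mu_i\geq0$, a real variable $\sigma_i$ with $\sigma_i^2=\mu_i$, giving $\mu_i-\sigma_i^2=0$. These substitutions yield the second and fourth lines of (\ref{kkt}) verbatim. Equivalently, one may work directly with the augmented Lagrangian $\mathcal{L}(\delta,\omega,\mu)=J+\sum_{i=1}^{n}\mu_i\left(v_i(\delta)+\omega_i^2\right)$: setting $\partial_\delta\mathcal{L}=\mathbf{0}$ recovers stationarity, $\partial_{\mu_i}\mathcal{L}=0$ recovers the slack equation $v_i+\omega_i^2=0$, and $\partial_{\omega_i}\mathcal{L}=2\mu_i\omega_i=0$ combined with $\omega_i^2=-v_i$ recovers complementary slackness.

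The step I expect to demand the most care is the sign condition $\mu_i\geq0$, equivalently the fourth line $\mu_i-\sigma_i^2=0$. This does \emph{not} emerge from first-order stationarity of the augmented Lagrangian alone, since $\partial_{\omega_i}\mathcal{L}=0$ is insensitive to the sign of $\mu_i$; it is exactly the conclusion of the KKT theorem that relies on a constraint qualification, for without one only the Fritz John conditions, with a possibly vanishing objective multiplier, are guaranteed. I would therefore either state the constraint qualification explicitly as a standing hypothesis or supply the nonnegativity through a second-order argument on $\mathcal{L}$ at $(\delta^*,\omega)$, and only then perform the squared-variable substitution to obtain (\ref{kkt}).
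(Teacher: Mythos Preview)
Your proposal is correct and follows essentially the same route as the paper: invoke the four standard KKT components (stationarity, primal feasibility, dual feasibility, complementary slackness) and then rewrite the two inequalities via squared slack variables $\omega_i^2$ and $\sigma_i^2$. You are in fact more careful than the paper, which omits any mention of a constraint qualification and of how the dynamic recursions collapse into a single reduced objective in $\delta$.
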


\begin{proof}
The KKT conditions for Optimization Problem (\ref{formulation}) are composed of four components: stationary, primal feasibility, dual feasibility and complementary slackness. Specifically,
stationary condition allows us to obtain
$$
\nabla J(\delta^*,x_m,y_m)+\sum_{i=1}^{n}\mu_i\nabla v_i(\delta^*)=0,
$$
where
$$
\Omega=\bigcap_{i=1}^{n}\{\delta~|~v_i(\delta)\leq 0\}.
$$
Moreover, the primal feasibility leads to $g_i(\delta)\leq0$, $i\in I_n$, which can be converted into equality constraints
$$
v_i(\delta^*)+\omega_i^2=0, \quad i\in I_n
$$
with the unknown variables $\omega_i\in R$. Further, the dual feasibility corresponds to $\mu_i\geq0$, which can be replaced by
$$
\mu_i-\sigma_i^2=0, \quad i\in I_n
$$
with the unknown variables $\sigma_i\in R$. Finally, the complementary slackness gives
$$
\mu_i\cdot v_i(\delta^*)=0, \quad i\in I_n
$$
This completes the proof.
\end{proof}

\begin{remark}
To reduce the computation burden, the gradient $\nabla J(\delta^*,x_m,y_m)$ can be approximated by
\begin{equation}\label{app}
\begin{split}
\nabla J(\delta,x_m,y_m)|_{\delta=\delta^*}&=\left(\frac{\partial J(\delta,x_m,y_m)}{\partial\delta_i}|_{\delta=\delta^*}\right) \in R^{dim(\delta)} \\
&\approx \left(\frac{J(\delta^*+\epsilon\cdot e_i,x_m,y_m)-J(\delta^*,x_m,y_m)}{\epsilon}\right)
\end{split}
\end{equation}
with the sufficiently small $\epsilon$ and the unit vector $e_i$ with $1$ in the $i$-th position and $0$ elsewhere. And the symbol $dim(\delta)$ denotes the dimension of the variable $\delta$.
\end{remark}

\section{Numeric Solver} \label{sec:num}
To avoid the computation of partial derivatives in (\ref{app}), the Jacobian Free Newton Krylov (JFNK) method is employed in this section to solve the system of nonlinear algebraic equations without computing the Jacobian matrix. Essentially, the JFNK methods are synergistic combinations of Newton methods for solving nonlinear equations and Krylov subspace methods for solving linear equations \cite{kno04}.
To facilitate the analysis, the system (\ref{kkt}) is rewritten in matrix form
\begin{equation}\label{mequation}
    S(\mathbf{z})=\mathbf{0},
\end{equation}
where the unknown vector $\mathbf{z}$ is composed of $\delta^*$, $\mu_i$, $\omega_i$, $\sigma_i$, $i\in I_n$. And $\mathbf{0}$ refers to a zero vector with the proper dimension. To obtain the iterative formula for solving (\ref{mequation}), the Taylor series of $S(\mathbf{z})$ at $\mathbf{z}^{s+1}$ is computed as follows
\begin{equation}\label{taylor}
    S(\mathbf{z}^{s+1})=S(\mathbf{z}^{s})+\mathfrak{J}(\mathbf{z}^s)(\mathbf{z}^{s+1}-\mathbf{z}^{s})+O(\Delta\mathbf{z}^{s})
\end{equation}
with $\Delta\mathbf{z}^{s}=\mathbf{z}^{s+1}-\mathbf{z}^{s}$. By neglecting the high-order term $O(\Delta\mathbf{z}^{s})$ and setting $S(\mathbf{z}^{s+1})=\mathbf{0}$, we obtain
\begin{equation}\label{lequ}
    \mathfrak{J}(\mathbf{z}^s)\cdot\Delta\mathbf{z}^{s}=-S(\mathbf{z}^{s}),\quad  \quad s\in Z^{+}
\end{equation}
where $\mathfrak{J}(\mathbf{z}^s)$ represents the Jacobian matrix and $s$ denotes the iteration index. Thus, solutions to Equation (\ref{mequation}) can be approximated by implementing Newton iterations $\mathbf{z}^{s+1}=\mathbf{z}^{s}+\Delta\mathbf{z}^{s}$,
where $\Delta\mathbf{z}^{s}$ is obtained by Krylov methods. First of all, the Krylov subspace is constructed as follows
\begin{equation}\label{kspace}
    K_i=\mathrm{span}\left(\mathbf{r}^s,~\mathfrak{J}(\mathbf{z}^s)\mathbf{r}^s,~\mathfrak{J}(\mathbf{z}^s)^2\mathbf{r}^s,...,~\mathfrak{J}(\mathbf{z}^s)^{i-1}\mathbf{r}^s\right)
\end{equation}
with $\mathbf{r}^s=-S(\mathbf{z}^{s})-\mathfrak{J}(\mathbf{z}^s)\cdot\Delta\mathbf{z}_0^{s}$, where $\Delta\mathbf{z}_0^{s}$ is the initial guess for the Newton correction and is typically zero \cite{kno04}. Actually, the optimal solution to $\Delta\mathbf{z}^{s}$ is the linear combination of elements in Krylov subspace $K_i$.
\begin{equation}\label{gmres}
    \Delta \mathbf{z}^s=\Delta \mathbf{z}_0^s+\sum_{j=1}^{i-1}\lambda_j \cdot \mathfrak{J}(\mathbf{z}^s)^j\mathbf{r}^s,
\end{equation}
where $\lambda_j$, $j\in\{1,2,...,i-1\}$ is obtained by minimizing the residual $\mathbf{r}^s$ with the Generalized Minimal RESidual (GMRES) method with the constraint of step size $\|\Delta \mathbf{z}^s\|\leq c$ \cite{saad86}. In particular, matrix-vector products in (\ref{gmres}) can be approximated by
\begin{equation}\label{jv}
\mathfrak{J}(\mathbf{z}^s)\mathbf{r}^s\approx\frac{S(\mathbf{z}^s+\epsilon\cdot\mathbf{r}^s)-S(\mathbf{z}^s)}{\epsilon},
\end{equation}
where $\epsilon$ is a sufficiently small value \cite{saad90}. In this way, the computation of Jacobian matrix is avoided via matrix-vector products in (\ref{jv}) while solving Equation (\ref{mequation}). Actually, the accuracy of the forward difference scheme (\ref{jv}) can be estimated as follows.
\begin{prop}
$$
\left\|\frac{S(\mathbf{z}^s+\epsilon\cdot\mathbf{r}^s)-S(\mathbf{z}^s)}{\epsilon}-\mathfrak{J}(\mathbf{z}^s)\mathbf{r}^s\right\|\leq\frac{\epsilon\|\mathbf{r}^s\|^2}{2}\sup_{t\in[0,1]} \|S^{(2)}(\mathbf{z}^s+t\epsilon\cdot\mathbf{r}^s)\|
$$
where $S^{(2)}(z)$ denotes the second order derivative of $S(z)$ with respect to the variable $z$.
\end{prop}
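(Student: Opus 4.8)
The plan is to reduce the vector-valued estimate to a one-dimensional Taylor expansion by introducing the auxiliary curve $\phi(t)=S(\mathbf{z}^s+t\epsilon\cdot\mathbf{r}^s)$ for $t\in[0,1]$, which interpolates between the two evaluation points appearing in the forward difference. By the chain rule its first derivative is $\phi'(t)=\epsilon\,\mathfrak{J}(\mathbf{z}^s+t\epsilon\cdot\mathbf{r}^s)\mathbf{r}^s$, so that $\phi'(0)=\epsilon\,\mathfrak{J}(\mathbf{z}^s)\mathbf{r}^s$ reproduces exactly the matrix-vector product we wish to compare against, while its second derivative $\phi''(t)=\epsilon^2\,S^{(2)}(\mathbf{z}^s+t\epsilon\cdot\mathbf{r}^s)[\mathbf{r}^s,\mathbf{r}^s]$ is the second-order derivative of $S$ evaluated as a bilinear form on the direction $\mathbf{r}^s$ twice. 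This recasts the quantity to be bounded as the remainder of a Taylor expansion in the single scalar variable $t$.

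Next I would invoke Taylor's theorem with the integral form of the remainder,
\begin{equation*}
\phi(1)=\phi(0)+\phi'(0)+\int_0^1(1-t)\,\phi''(t)\,dt.
\end{equation*}
Substituting $\phi(1)=S(\mathbf{z}^s+\epsilon\cdot\mathbf{r}^s)$, $\phi(0)=S(\mathbf{z}^s)$ and $\phi'(0)=\epsilon\,\mathfrak{J}(\mathbf{z}^s)\mathbf{r}^s$, then dividing through by $\epsilon$, yields the exact identity
\begin{equation*}
\frac{S(\mathbf{z}^s+\epsilon\cdot\mathbf{r}^s)-S(\mathbf{z}^s)}{\epsilon}-\mathfrak{J}(\mathbf{z}^s)\mathbf{r}^s=\epsilon\int_0^1(1-t)\,S^{(2)}(\mathbf{z}^s+t\epsilon\cdot\mathbf{r}^s)[\mathbf{r}^s,\mathbf{r}^s]\,dt,
\end{equation*}
so that the approximation error is expressed entirely through the second derivative of $S$ along the segment joining the two evaluation points.

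Finally, I would take norms on both sides, pass the norm inside the integral, and bound the bilinear action of $S^{(2)}$ on $(\mathbf{r}^s,\mathbf{r}^s)$ by $\|S^{(2)}(\cdot)\|\,\|\mathbf{r}^s\|^2$. Pulling the supremum of $\|S^{(2)}\|$ over $t\in[0,1]$ out of the integral and using the elementary evaluation $\int_0^1(1-t)\,dt=\tfrac{1}{2}$ produces the constant $\tfrac{1}{2}$ together with the claimed bound
\begin{equation*}
\left\|\frac{S(\mathbf{z}^s+\epsilon\cdot\mathbf{r}^s)-S(\mathbf{z}^s)}{\epsilon}-\mathfrak{J}(\mathbf{z}^s)\mathbf{r}^s\right\|\leq\frac{\epsilon\|\mathbf{r}^s\|^2}{2}\sup_{t\in[0,1]}\left\|S^{(2)}(\mathbf{z}^s+t\epsilon\cdot\mathbf{r}^s)\right\|.
\end{equation*}

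The integral evaluation and the norm manipulations are routine; the step demanding the most care is the bookkeeping around $S^{(2)}$ — specifically, fixing a consistent operator norm for the Hessian tensor and justifying that its action on the pair $(\mathbf{r}^s,\mathbf{r}^s)$ is dominated by $\|S^{(2)}\|\,\|\mathbf{r}^s\|^2$, together with the standing requirement that $S$ be twice continuously differentiable along the segment so that the integral-remainder form of Taylor's theorem applies.
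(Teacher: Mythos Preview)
Your proof is correct and follows essentially the same approach as the paper: both obtain the exact integral-remainder identity $\frac{S(\mathbf{z}^s+\epsilon\mathbf{r}^s)-S(\mathbf{z}^s)}{\epsilon}-\mathfrak{J}(\mathbf{z}^s)\mathbf{r}^s=\epsilon\int_0^1(1-t)\,S^{(2)}(\mathbf{z}^s+t\epsilon\mathbf{r}^s)\mathbf{r}^s\mathbf{r}^s\,dt$, then take norms, pull out the supremum, and evaluate $\int_0^1(1-t)\,dt=\tfrac12$. The only cosmetic difference is that the paper imports this identity directly from Ortega--Rheinboldt, whereas you derive it via the one-variable auxiliary curve $\phi(t)=S(\mathbf{z}^s+t\epsilon\mathbf{r}^s)$ and Taylor's theorem with integral remainder.
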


\begin{proof}
It follows from NR $3.3$-$3$ in \cite{ort00} that
$$
\frac{S(\mathbf{z}^s+\epsilon\cdot\mathbf{r}^s)-S(\mathbf{z}^s)}{\epsilon}-\mathfrak{J}(\mathbf{z}^s)\mathbf{r}^s=\int_{0}^{1}\epsilon (1-t)S^{(2)}(\mathbf{z}^s+t\epsilon\cdot\mathbf{r}^s)\mathbf{r}^s\mathbf{r}^sdt
$$
which implies
\begin{equation*}
    \begin{split}
         &~~~\left\|\frac{S(\mathbf{z}^s+\epsilon\cdot\mathbf{r}^s)-S(\mathbf{z}^s)}{\epsilon}-\mathfrak{J}(\mathbf{z}^s)\mathbf{r}^s\right\|\\
         &=\left\|\int_{0}^{1}\epsilon (1-t)S^{(2)}(\mathbf{z}^s+t\epsilon\cdot\mathbf{r}^s)\mathbf{r}^s\mathbf{r}^sdt\right\| \\
         &\leq \epsilon \int_{0}^{1}(1-t)\left\|S^{(2)}(\mathbf{z}^s+t\epsilon\cdot\mathbf{r}^s)\mathbf{r}^s\mathbf{r}^s\right\|dt\\
         &\leq \epsilon \int_{0}^{1} (1-t)\|S^{(2)}(\mathbf{z}^s+t\epsilon\cdot\mathbf{r}^s)\|\cdot\|\mathbf{r}^s\|^2dt \\
         & \leq \epsilon \sup_{t\in[0,1]}\|S^{(2)}(\mathbf{z}^s+t\epsilon\cdot\mathbf{r}^s)\|\cdot\|\mathbf{r}^s\|^2\int_{0}^{1}(1-t)dt \\
         & = \frac{\epsilon}{2}\|\mathbf{r}^s\|^2\cdot\sup_{t\in[0,1]} \|S^{(2)}(\mathbf{z}^s+t\epsilon\cdot\mathbf{r}^s)\|
     \end{split}
\end{equation*}
The proof is thus completed.
\end{proof}

\begin{remark}
The choice of $\epsilon$ greatly affects the accuracy and robustness of the JFNK method. For the forward difference scheme (\ref{jv}), $\epsilon$ can be set equal to a value larger than the square root of machine epsilon to minimize the approximation error \cite{chan84}.
\end{remark}

\begin{table}
 \caption{\label{cia} Contingency Identification Algorithm.}
 \begin{center}
 \begin{tabular}{lcl} \hline
 \textbf{Initialize:} $l_{\max}$, $\epsilon_{\min}$, $\epsilon_{0}$, and $\delta=\bf{0}$ \\
 \textbf{Goal:} $\delta^*$ and $J(\delta^*,x_m,y_m)$ \\ \hline
  1: \textbf{for}~$l=0$ \textbf{to} $l_{\max}$ \\
  2: ~~~~~~~~$s=0$ \\
  3: ~~~~~~~\textbf{while}~($\epsilon_{s}>\epsilon_{\min}$) \\
  4: ~~~~~~~~~~~~~~Calculate the residual $\mathbf{r}^s=-S(\mathbf{z}^{s})-\mathfrak{J}(\mathbf{z}^s)\cdot\Delta\mathbf{z}_0^{s}$ \\
  5: ~~~~~~~~~~~~~~Construct the Krylov subspace $K_i$ in (\ref{kspace}) \\
  6: ~~~~~~~~~~~~~~Approximate $\mathfrak{J}(\mathbf{z}^s)^j\mathbf{r}^s$ in (\ref{gmres}) using (\ref{jv})   \\
  7: ~~~~~~~~~~~~~~Compute $\lambda_j$ in (\ref{gmres}) with the GMRES method \\
  8: ~~~~~~~~~~~~~~Compute $\Delta\mathbf{z}^{s}$  with (\ref{gmres}) \\
  9: ~~~~~~~~~~~~$\mathbf{z}^{s+1}=\mathbf{z}^{s}+\Delta\mathbf{z}^{s}$ \\
  10: ~~~~~~~~~~~~$\epsilon_{s+1}=\|\Delta\mathbf{z}^{s}\|/\|\mathbf{z}^s\|$ \\
  11: ~~~~~~~~~~~~$s=s+1$ \\
  12: ~~~~~~~\textbf{end while} \\
  13: ~~~~~~~~Update $\delta^*$  and $J(\delta^*,x_m,y_m)$ \\
  14: ~~~~~~~~\textbf{if} $\left(J(\delta^*,x_m,y_m)<J(\delta,x_m,y_m)\right)$ \\
  15: ~~~~~~~~~~~$\delta=\delta^*$ \\
  16: ~~~~~~~~\textbf{end if} \\
  17: ~~~~~~~ $l=l+1$ \\
  18: \textbf{end for} \\ \hline
 \end{tabular}
 \end{center}
\end{table}

Table \ref{cia} presents the implementation process of Contingency Identification Algorithm (CIA) with the aid of the JFNK method.
First of all, the initial values for some variables are specified as follow: $\delta=0$ and $l=0$, $\epsilon_{min}$, $\epsilon_0$ with
the condition $\epsilon_{\min}<\epsilon_{0}$, and the maximum iterative step $l_{\max}$. Then the JFNK method is employed to obtain the optimal disturbance $\delta^*$ and the cost $J(\delta^*,x_m,y_m)$ from Step $4$ to Step $13$. Specifically, the residual $\mathbf{r}^s$ is calculated in each iteration in order to construct the Krylov subspace $K_i$. For elements in $K_i$, the matrix-vector products are approximated by Equation (\ref{jv}) without forming the Jacobian. Next, the term $\Delta\mathbf{z}^{s}$ for Newton iterations is obtained via the GMRES method. The tolerance $\epsilon_s$ and step number $s$ are updated after implementing the Newton iteration for $\mathbf{z}^s$. Afterwards, a new iteration loop is launched if the termination condition $\epsilon_{s}\leq\epsilon_{\min}$ fails. After adopting the JFNK method, a disturbance value $\delta^*$ in (\ref{kkt}) is saved if it results in a worse cascading failure ($i.e.$, $J(\delta^*,x_m,y_m)<J(\delta,x_m,y_m)$). The above algorithm does not terminate until the maximum iterative step $l_{\max}$ is reached.

The following theoretical results allow us to roughly estimate the convergence accuracy of initial disturbances before implementing the CIA.
\begin{prop}
With the Contingency Identification Algorithm in Table \ref{cia}, the increment $\Delta\delta$ is upper bounded by
$$
\|\Delta\delta\| \leq \epsilon_{\min} \cdot \left(\|z^0\|+c \cdot s_{\max}\right)
$$
where $z^0$ denotes the initial value for the unknown vector $z$ in the numerical algorithm, and $s_{\max}$ refers to the maximum iteration steps.
\end{prop}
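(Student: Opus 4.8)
The plan is to bound $\|\Delta\delta\|$ by the norm of the full Newton increment on the terminating inner iteration, and then to control that increment through the stopping tolerance together with the GMRES step-size constraint. First I would note that, by the construction preceding (\ref{mequation}), the unknown vector $\mathbf{z}$ stacks $\delta^*$ together with the multipliers $\mu_i$ and the slack variables $\omega_i,\sigma_i$, so $\delta$ is a subvector of $\mathbf{z}$ and its increment $\Delta\delta$ is the corresponding block of the full Newton correction $\Delta\mathbf{z}^{s}$. Hence $\|\Delta\delta\|\le\|\Delta\mathbf{z}^{s}\|$ for every $s$, and in particular for the step at which the while loop of Table \ref{cia} exits.

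Next I would exploit the stopping rule. The inner loop runs while $\epsilon_s>\epsilon_{\min}$ and halts once $\epsilon_s\le\epsilon_{\min}$, where Step $10$ sets $\epsilon_{s+1}=\|\Delta\mathbf{z}^s\|/\|\mathbf{z}^s\|$. Writing $s_f$ for the terminating index, this means the last accepted correction obeys $\|\Delta\mathbf{z}^{s_f}\|/\|\mathbf{z}^{s_f}\|\le\epsilon_{\min}$, i.e. $\|\Delta\mathbf{z}^{s_f}\|\le\epsilon_{\min}\,\|\mathbf{z}^{s_f}\|$. It then remains only to bound $\|\mathbf{z}^{s_f}\|$.

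For that I would telescope the Newton updates $\mathbf{z}^{s+1}=\mathbf{z}^s+\Delta\mathbf{z}^s$ of Step $9$ to write $\mathbf{z}^{s_f}=\mathbf{z}^0+\sum_{k=0}^{s_f-1}\Delta\mathbf{z}^k$, apply the triangle inequality, and invoke the GMRES step-size constraint $\|\Delta\mathbf{z}^k\|\le c$ imposed in (\ref{gmres}) on each correction. Since the inner loop performs at most $s_{\max}$ steps, this yields
\[
\|\mathbf{z}^{s_f}\|\le\|\mathbf{z}^0\|+\sum_{k=0}^{s_f-1}\|\Delta\mathbf{z}^k\|\le\|\mathbf{z}^0\|+c\,s_{\max}.
\]
Combining the three estimates gives $\|\Delta\delta\|\le\|\Delta\mathbf{z}^{s_f}\|\le\epsilon_{\min}\,\|\mathbf{z}^{s_f}\|\le\epsilon_{\min}\big(\|\mathbf{z}^0\|+c\,s_{\max}\big)$, which is the claimed bound.

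I expect the main obstacle to be bookkeeping rather than genuine analysis: pinning down precisely which correction $\Delta\mathbf{z}$ satisfies the tolerance (the off-by-one between the loop guard $\epsilon_s>\epsilon_{\min}$ and the update rule $\epsilon_{s+1}=\|\Delta\mathbf{z}^s\|/\|\mathbf{z}^s\|$), and confirming that the per-step bound $\|\Delta\mathbf{z}^k\|\le c$ indeed holds for every $k$, so that the telescoped sum is controlled by $c\,s_{\max}$. The identification of $\Delta\delta$ with the $\delta$-block of the final full increment is the only modeling step, and it is immediate from the layout of $\mathbf{z}$.
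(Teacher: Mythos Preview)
Your proposal is correct and follows essentially the same route as the paper's own proof: the paper also bounds $\|\Delta\delta\|\le\|\Delta\mathbf{z}^{s}\|$, uses the stopping criterion $\|\Delta\mathbf{z}^{s}\|\le\epsilon_{\min}\|\mathbf{z}^{s}\|$, telescopes $\mathbf{z}^{s}=\mathbf{z}^{0}+\sum_{i=0}^{s-1}\Delta\mathbf{z}^{i}$, and applies the step-size constraint $\|\Delta\mathbf{z}^{i}\|\le c$ together with $s\le s_{\max}$. Your version is in fact slightly more careful about the bookkeeping (the terminating index $s_f$ and the off-by-one between $\epsilon_s$ and $\epsilon_{s+1}$) than the paper itself.
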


\begin{proof}
According to the Contingency Identification Algorithm, we have the following inequality
$$
\frac{\|\Delta\mathbf{z}^{s}\|}{\|\mathbf{z}^s\|}\leq\epsilon_{\min}
$$
after adopting the JFNK method. In addition, it follows from the updating law $\mathbf{z}^{s+1}=\mathbf{z}^{s}+\Delta\mathbf{z}^{s}$ that
$\mathbf{z}^{s}=\mathbf{z}^{0}+\sum_{i=0}^{s-1}\Delta\mathbf{z}^{i}$, which allows us to obtain
\begin{equation*}
    \begin{split}
    \|\Delta\mathbf{z}^{s}\|&\leq\epsilon_{\min} \cdot \|\mathbf{z}^s\|\\
                            &=\epsilon_{\min} \cdot \left\|\mathbf{z}^{0}+\sum_{i=0}^{s-1}\Delta\mathbf{z}^{i}\right\| \\
                            &\leq \epsilon_{\min}\cdot \left(\|\mathbf{z}^{0}\|+\sum_{i=0}^{s-1}\|\Delta\mathbf{z}^{i}\|\right) \\
                            & \leq \epsilon_{\min} \cdot \left(\|z^0\|+c \cdot s_{\max}\right),
    \end{split}
\end{equation*}
due to $\|\Delta \mathbf{z}^s\|\leq c$ and $s\leq s_{\max}$. Moreover, it follows from $\|\Delta\delta\|\leq\|\Delta\mathbf{z}^{s}\|$ that we have
$$
\|\Delta\delta\|\leq\epsilon_{\min} \cdot \left(\|z^0\|+c \cdot s_{\max}\right),
$$
which completes the proof.
\end{proof}

\begin{remark}
According to the CIA in Table \ref{cia}, the value of cost function $J(\delta^*,x_m,y_m)$ decreases monotonically as the iteration step $l_{\max}$ increases. Considering that $J(\delta^*,x_m,y_m)$ is normally designed to have a lower bound (i.e., $J(\delta^*,x_m,y_m)\geq0$), $J(\delta^*,x_m,y_m)$ converges to a local minimum. This enables us to identify the corresponding initial disturbances $\delta^*$.
\end{remark}

\section{Case Study}\label{sec:sim}
In this section, the proposed CIA in Table \ref{cia} is implemented to search for the disruptive disturbances on selected branches of IEEE 118 Bus System \cite{zim11}. Numerical results on disruptive disturbances are validated by disturbing the selected branch with the magnitude of disturbance identified by the CIA.

\subsection{Cascades model}
In the simulations, a simple cascades model is taken into account and it includes FACTS devices, HVDC links and protective relays. The mathematical descriptions of these components are presented in the Appendix. In addition, the DC power flow equation is employed to ensure the computational efficiency and avoid the numerical non-convergence \cite{jia16}. When power grids are subject to the malicious disturbances, the FACTS devices take effect to adjust the branch admittance and balance the power flow for relieving the stress of power networks. If the stress is not eliminated, protective relays will be activated to serve the overloading branches on the condition that the timer of circuit breakers runs out of the preset time. The outage of overloading branches may result in the severer stress of power transmission networks and end up with the cascading blackout.
The evolution time of cascading failure is introduced to allow for the time factor of cascading blackouts. Essentially, the time interval between two consecutive cascading steps basically depends on the preset time of the timer in protective relays \cite{jia16}. Thus, the evolution time of cascading failure is roughly estimated by $t=kT$ at the $k$-th cascading step.

\subsection{Parameter setting}
Per-unit system is adopted with the base value of $100$ MVA in numerical simulations, and the power flow threshold for each branch is $5\%$ larger than the normal power floe on each branch without any disturbances. The power flow on each branch is close to the saturation, although it does not exceed their respective thresholds. In this way, the power system is vulnerable to initial contingencies, and thus is likely to suffer from cascading blackouts. The cost function in (\ref{formulation}) is designed as $\|P_e(\delta,P^m, Y^m_p)\|^2$ to minimize the total power flow on branches by identifying the initial disturbance $\delta$. Here $P_e$ represents the vector of power flow on branches. $P^m$ and $Y^m_p$ denote the vector of injected power on buses and that of branch admittance at the end of cascading failure, respectively. The maximum iterative step $l_{\max}$ is equal to $10$ in the CIA. Other parameters are given as follows: $\epsilon=10^{-2}$ in Equation (\ref{app}), $\epsilon_{\min}=10^{-8}$ in the JFNK method. Branch $8$ (i.e., the red link connecting Bus $5$ to Bus $8$ in Fig.~\ref{Step1}) is randomly selected as the disturbed element of power transmission networks. The lower and upper bounds of initial disturbances on Branch $8$ are given by $\underline{\delta}=0$ and $\bar{\delta}=37.45$, respectively. Actually, the upper bound of initial disturbances directly leads to the branch outage. And the total number of cascading steps is $m=12$.
For simplicity, we specify the same values for the parameters of three HVDC links as follows: $R_{ci}=R_{cr}=R_L=0.1$, $\alpha=\pi/15$ and $\gamma=\pi/4$. Regarding the FACTS devices, we set $X_{min,i}=0$, $X_{max,i}=10$ and $X^{*}_i=0$ for the TCSC, and $K_P=4$, $K_I=3$ and $K_D=2$ for the PID controller. In addition, the reference power flow $P^{*}_{e,i}$ is equal to the threshold of power flow on the relevant branch.

\begin{figure}\centering
 {\includegraphics[width=0.5\textwidth]{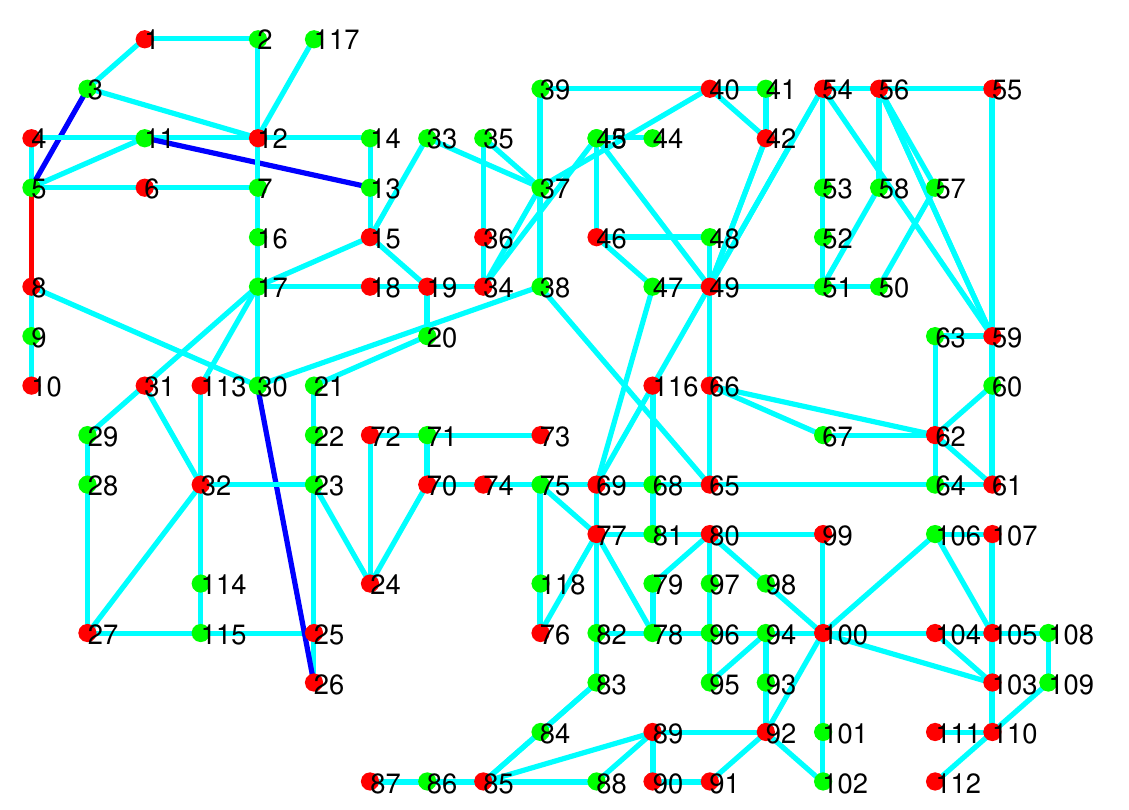}}
 \caption{\label{Step1} Initial state of IEEE 118 Bus System. Red balls denote the generator buses, while blue ones stand for the load buses. Cyan lines represent the branches of power systems. In addition, the red line is selected as the disturbed branch, and three blue lines are the HVDC links, including Branch 4, Branch 16 and Branch 38.}
\end{figure}

\subsection{Simulation and validation}
Figure \ref{Step1} shows the initial state of IEEE 118 Bus System in the normal condition, and this power system includes 53 generator buses, 64 load buses, 1 reference bus (i.e., Bus $69$) and 186 branches. And the HVDC links are denoted by blue lines, which include Branch 4 connecting Bus $3$ to Bus $5$, Branch $16$ connecting Bus $11$ to Bus $13$ and Branch $38$ connecting Bus $26$ to Bus $30$. Two preset values of the timer are taken into consideration in protective relays, $i.e.$, $T=0.5$s and $T=1$s. Contingency Identification Algorithm is carried out to search for the disturbance that results in the worst-case cascading failures of power systems (i.e., the minimum value of cost function $\|P_e(\delta,P^m, Y^m_p)\|^2$). For the IEEE 118 Bus System without the FACTS devices, the computed magnitude of disturbance on Branch $8$ is $37.45$, which exactly leads to the outage of Branch 8. For the power system with the FACTS devices and the preset time of the timer $T=0.5$s, the disturbance magnitude identified by the CIA is $36.77$, while it is $35.98$ for $T=1$s.

\begin{figure}\centering
 {\includegraphics[width=0.5\textwidth]{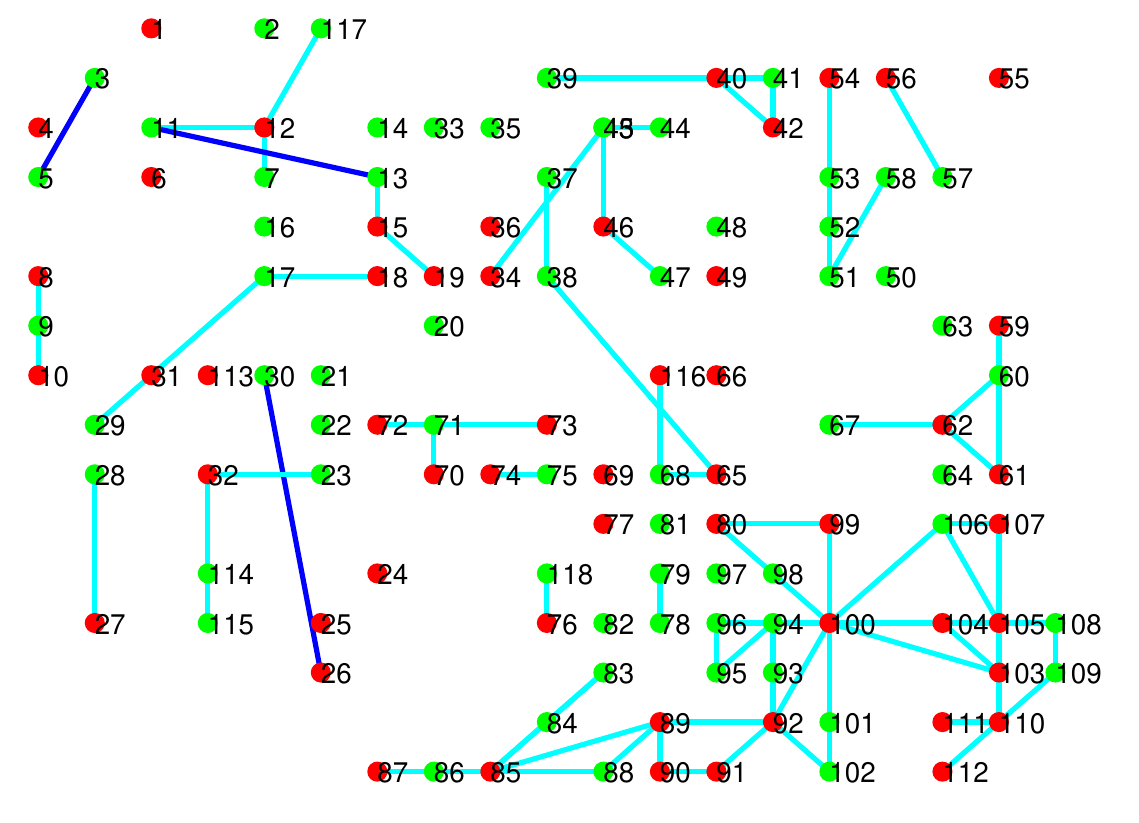}}
 \caption{\label{nofacts} Final configuration of IEEE 118 Bus System without FACTS devices.}
\end{figure}

Next, we validate the proposed identification approach by adding the computed disturbances on Branch 8 of IEEE $118$ Bus Systems. Specifically, Figure~\ref{nofacts} demonstrates the final state of IEEE 118 Bus System with no FACTS devices and with the preset time of circuit breaker $T=1$s. The cascading process terminates with 95 outage branches and the value of cost function is $53.28$ after 16 seconds, and the system collapses with 42 islands in the end. These 42 islands include $24$ isolated buses and $18$ subnetworks encircled by the dashed lines. In contrast, Figure~\ref{facts05} presents the final configuration of IEEE 118 Bus Systems with the protection of the FACTS devices and with the preset time $T=0.5$s. The cascading process ends up with 40 outage branches and the value of cost function
is $102.56$ after 10 seconds, and the power system is separated into $17$ islands, which include $6$ subnetworks and $11$ isolated buses. Figure \ref{facts1} gives the final state of power systems with FACTS devices and $T=1$s. It is observed that the power network is eventually split into $3$ islands (Bus 14, Bus 16 and a subnetwork composed of all other buses) with only $6$ outage branches and the cost function of $153.69$. Note that the initial disturbances identified by the CIA fail to cause the outage of Branch $8$ in the end for both $T=0.5$s and $T=1$s. The above simulation results demonstrate the advantage of the FACTS devices in preventing the propagation of cascading outages. A larger preset time of timer enables the FACTS devices to sufficiently adjust the branch admittance in response to the overload stress. As a result, the less severe damages are caused by the contingency for the larger preset time of timer.

\begin{figure}\centering
{\includegraphics[width=0.5\textwidth]{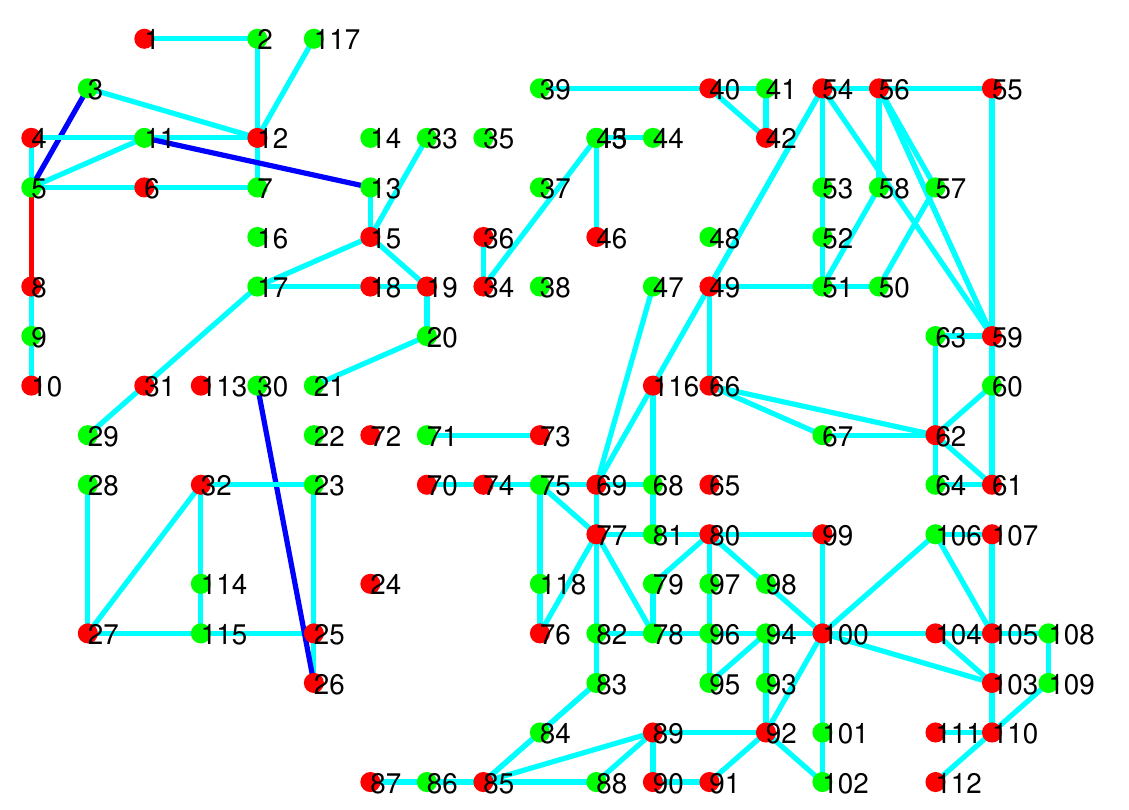}}
\caption{\label{facts05} Final configuration of IEEE 118 Bus System with FACTS devices and $T=0.5$s.}
\end{figure}

\begin{figure}\centering
{\includegraphics[width=0.5\textwidth]{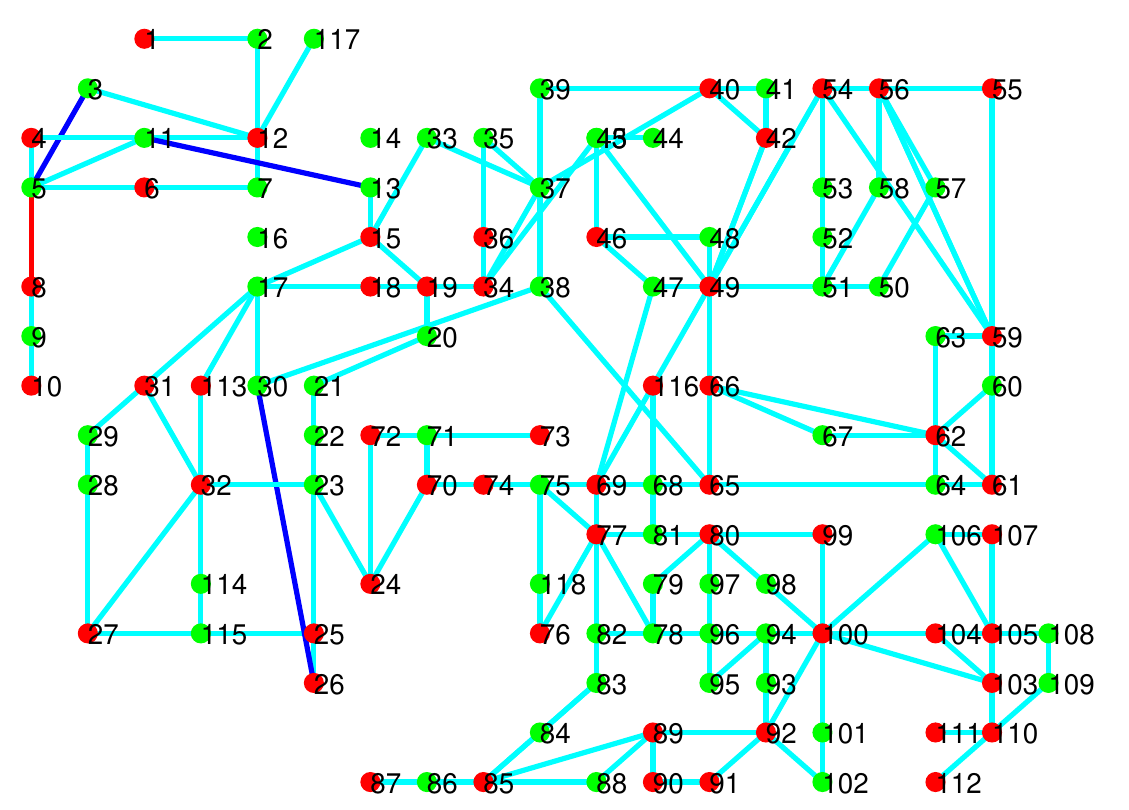}}
\caption{\label{facts1} Final configuration of IEEE 118 Bus System with FACTS devices and $T=1$s.}
\end{figure}

Figure \ref{comp} presents the time evolution of branch outages in the IEEE 118 Bus System as a result of disturbing Branch $8$ in three different scenarios. The cyan squares denote the number of outage branches with no FACTS devices and $T=1$s, while the green and blue ones refer to the numbers of outage branches with the FACTS devices and with $T=0.5$s and $T=1$s, respectively. The disturbances identified by the CIA are added to change the admittance of Branch $8$ at $t=0$s. With no FACTS devices, the cascading outage of branches propagates quickly from $t=2$s to $t=10$s and terminates at $t=16$s. When the FACTS devices are adopted and the preset time of timer is $T=0.5$s, the cascading failure starts at $t=2$s and speeds up till $t=8$s and stops at $t=10$s. For $T=1$s, the cascading outage propagates slowly due to the larger preset time of timer and comes to an end with only $6$ outage branches at $t=8$s. Together with protective relays and HVDC links, the FACTS devices succeed in protecting power systems against blackouts by adjusting the branch impedance in real time. More precisely, the number of outage branches decreases by $57.9\%$ with FACTS devices and $T=0.5$s and decreases by $93.7\%$ with FACTS devices and $T=1$s.

\begin{figure}\centering
{\includegraphics[width=0.5\textwidth]{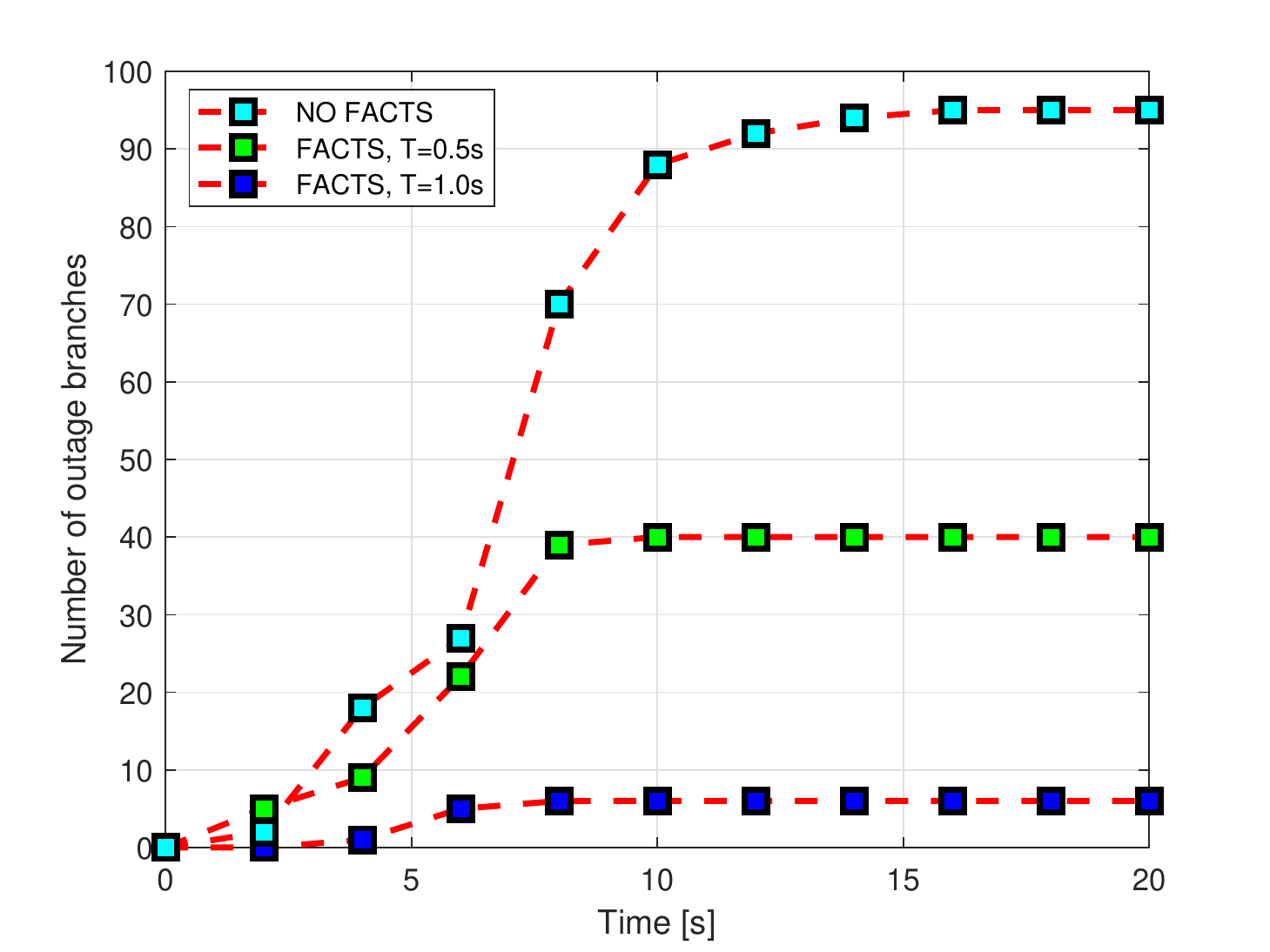}}
\caption{\label{comp} Time evolution of outage branches during cascading blackouts.}
\end{figure}

\section{Conclusions}\label{sec:con}

In this paper, we investigated the problem of identifying the initial contingencies that lead to cascading blackout of power transmission networks equipped with FACTS devices, HVDC links and protective relays. A universal optimization formulation was proposed to identify the contingencies, and an efficient numerical method was presented to solve the optimization problem. Numerical simulations were carried out on the IEEE 118 Bus Systems to validate the proposed identification approach. Significantly, the proposed contingency identification algorithm allows us to detect some nontrivial disturbances that lead to the severe cascading failure of power transmission networks, other than directly severing the branch. It is demonstrated that the coordination of FACTS devices and protective relays greatly enhances the capability of power grids against blackouts. Future work may include the comparison of different cascades models and the validation of the identified disturbances with real data in power system blackouts.

\section*{Acknowledgment}
This work is partially supported by the Future Resilience System Project at the Singapore-ETH Centre (SEC), which is funded by the National Research Foundation of Singapore (NRF) under its Campus for Research Excellence and Technological Enterprise (CREATE) program. It is also supported by Ministry of Education of Singapore under Contract MOE2016-T2-1-119.

\section*{Appendix: Component Models}


\subsection{FACTS devices}
FACTS devices can greatly enhance the stability and transmission capability of power systems. As an effective FACTS device, TCSC has been widely installed to control the branch impedance and relieve system stresses. The dynamics of TCSC is described by a first order dynamical model \cite{pas95}
\begin{equation}\label{tcsc}
    T_{C,i}\frac{d{X}_{C,i}}{dt}=-X_{C,i}+X^{*}_{i}+u_i, \quad X_{\min,i}\leq X_{C,i}\leq X_{\max,i}
\end{equation}
where $X^{*}_{i}$ refers to its reference reactance of Branch $i$ for the steady power flow. $X_{\min,i}$
and $X_{\max,i}$ are the lower and upper bounds of the branch reactance $X_{C,i}$ respectively and $u_i$ represents the supplementary control input, which is designed to stabilize the disturbed power system \cite{son00}. For simplicity, PID controller is adopted to regulate the power flow on each branch
\begin{equation}\label{pid}
    u_i(t)=K_P\cdot e_i(t)+K_I\cdot\int_{0}^{t}e_i(\tau)d\tau+K_D\cdot\frac{de_i(t)}{dt}
\end{equation}
where $K_P$, $K_I$ and $K_D$ are tunable coefficients, and the error $e_i(t)$ is given by
$$
e_i(t)=\left\{
         \begin{array}{ll}
           P^{*}_{e,i}-|P_{e,i}(t)|, & \hbox{$|P_{e,i}(t)|\geq P^{*}_{e,i}$;} \\
           0, & \hbox{otherwise.}
         \end{array}
       \right.
$$
Here, $P^{*}_{e,i}$ and $P_{e,i}(t)$ denote the reference power flow and the actual power flow on Branch $i$, respectively. Note that TCSC fails to function when the transmission line is severed.

\subsection{HVDC links}

HVDC links work as a protective barrier to prevent the propagation of cascading outages in practice, and it is normally composed of a transformer, a rectifier, a DC line and an inverter. Actually, the rectifier terminal can be regarded as a bus with real power consumption $P_{r}$, while the inverter terminal can be treated as a bus with real power generation $P_{i}$. The direct current from the rectifier to the inverter is computed as follows \cite{kun94}
$$
I_d=\frac{3\sqrt{3}(\cos\alpha-\cos\gamma)}{\pi(R_{cr}+R_L-R_{ci})},
$$
where $\alpha\in[\pi/30,\pi/2]$ denotes the ignition delay angle of the rectifier, and $\gamma\in[\pi/12,\pi/9]$ represents the extinction advance angle of the inverter. $R_{cr}$ and $R_{ci}$ refer to the equivalent communicating resistances for the rectifier and inverter, respectively. Additionally, $R_L$ denotes the resistance of the DC transmission line. Thus the power consumption at the rectifier terminal is
\begin{equation}\label{pr}
    P_r=\frac{3\sqrt{3}}{\pi}I_d\cos\alpha-R_{cr}I^2_d,
\end{equation}
and at the inverter terminal is
\begin{equation}\label{pi}
    P_i=\frac{3\sqrt{3}}{\pi}I_d\cos\gamma-R_{ci}I^2_d=P_r-R_LI^2_d.
\end{equation}
Note that $P_r$ and $P_i$ keep unchanged when $\alpha$ and $\gamma$ are fixed.

\subsection{Protective relay}
The protective relays are indispensable components in power systems protection and control. When the power flow exceeds the given threshold of the branch, the timer of circuit breaker starts to count down from the preset time \cite{jia16}. Once the timer runs out of the preset time, the transmission line is severed by circuit breakers and its branch admittance becomes zero. Specifically, a step function is designed to reflect the physical characteristics of branch outage as follows
$$
g(P_{e,i},\sigma_i)=\left\{
                            \begin{array}{ll}
                              0, & \hbox{$|P_{e,i}|>\sigma_i$ and $t_c>T$;} \\
                              1, & \hbox{otherwise.}
                            \end{array}
                          \right.
$$
where $T$ is the preset time of the timer in protective relays, and $t_c$ denotes the counting time of the timer. In addition, $P_{e,i}$ denotes the power flow on Branch $i$ with the threshold $\sigma_i$.

\begin{IEEEbiography}{Chao Zhai}
\end{IEEEbiography}

\begin{IEEEbiographynophoto}{Gaoxi Xiao}
\end{IEEEbiographynophoto}

\begin{IEEEbiographynophoto}{Hehong Zhang}
\end{IEEEbiographynophoto}

\begin{IEEEbiographynophoto}{Tso-Chien Pan}
\end{IEEEbiographynophoto}

\end{document}